\newcommand{\Reals}{\mathbb{R}}
\newcommand{\xis}{x^{\Join}}
\newcommand{\wis}{w_{i}^{\Join}}
\newcommand{\wio}{w_{i}^{\circ}}
\newcommand{\ddi}{\xi_i}
\newcommand{\ddone}{\xi_1}
\newcommand{\ddof}[1]{\xi_{#1}}
\newcommand{\xs}{x^{\ast}}
\newcommand{\ys}{y^{\ast}}
\newcommand{\vectori}{\left(\begin{matrix} x_{i} \\ y_{i} \\ 1
	\end{matrix}\right)}
\newcommand{\vectorii}{\left(\begin{matrix} x_{i+1} \\ y_{i+1} \\ 1
	\end{matrix}\right)}
\let\geq\geqslant
\let\leq\leqslant
\newtheorem{theorem}{Theorem}[section]
\newtheorem{lemma}{Lemma}[section]
\newbox\ProofSym
\begin{document}
	
\title{Fitting a Graph to One-Dimensional Data\thanks{SWC is supported by Research Grants Council, Hong Kong, China
    (project no.~16203718). OC and TL were supported by ICT R\&D
    program of MSIP/IITP~[IITP-2015-0-00199].  A preliminary version appeared in Proceedings of the  Canadian Conference 
    on Computational Geometry, 2020~\cite{cccg}.}}

\author{Siu-Wing Cheng\footnote{Department~of~Computer~Science~and~Engineering,
    HKUST, Hong Kong. Email: {\tt scheng@cse.ust.hk, taegyoung.lee@connect.ust.hk, zrenah@connect.ust.hk}.}
  \and 
  Otfried Cheong\footnote{Korea Advanced Institute of Science \& Technology (KAIST).  Email: {\tt otfried@kaist.airpost.net}.}
  \and
  Taegyoung Lee\footnotemark[2]
  \and
  Zhengtong Ren\footnotemark[2]}

\maketitle
	
\begin{abstract}
  Given $n$ data points in $\mathbb{R}^d$, an appropriate
  edge-weighted graph connecting the data points finds application in
  solving clustering, classification, and regresssion problems.  The
  graph proposed by Daitch, Kelner and Spielman~(ICML~2009) can be
  computed by quadratic programming and hence in polynomial time.
  While a more efficient algorithm would be preferable,
  replacing quadratic programming is challenging even for the special
  case of points in one dimension.  We develop a dynamic programming
  algorithm for this case that runs in $O(n^2)$ time.  
\end{abstract}

\section{Introduction}

Many interesting data sets can be interpreted as point sets
in~$\Reals^{d}$, where the dimension~$d$ is the number of features of
interest of each data point, and the coordinates are the values of
each feature.  Given such a data set, graph-based semi-supervised learning is 
a paradigm for making predictions on the unlabelled data using the proximity among 
the data points and possibly some labelled data
(e.g.~\cite{daitch09,jebara09,liu10,ng01,xiang10,zhang14,zhou03}).  
Classification, regression, and clustering are some popular applications.  
The graph has to be set up first in order to perform the subsequent processing.  This 
requires the determination of the graph edges and the weights to be associated 
with the edges.  For example, let $w_{ij}$ denote the
weight determined for the edge that connects two points $p_i$ and
$p_j$, and regression can be performed to predict function values
$f_i$'s at the points $p_i$'s by minimizing $\sum_{i,j} w_{ij} (f_i -
f_j)^2$, subject to fixing the subset of known
$f_i$'s~\cite{daitch09}.   To allow efficient data
analysis, it is important that the weighted graph is sparse.


The graph connectivity should satisfy the property that similar discrete samples
are connected.  To this end, different proximity 
graphs have been suggested for connecting proximal points.  The
\emph{kNN}-graph connects each point to its $k$~nearest neighbors.
The $\varepsilon$-ball graph connects each point to all other points
that are within a distance~$\varepsilon$.  After fixing the graph
connectivity, edges to ``near'' points are given large weights and edges 
to ``far away'' points are given small weights.  That is, the larger the weight of an 
edge between points $p$ and $q$, the higher the influence of $q$ on $p$ and vice versa.  
It is thus inappropriate to
use the Euclidean distances among the points as edge weights.    Naively 
setting an edge weight as the reciprocal of the edge length
does not work either because the influence of a point is required to
fall much more rapidly as that point moves farther away.  It has been
proposed to associate a weight of $\exp(-\ell^2/2\sigma^2)$ to an edge of
Euclidean length $\ell$ for some \emph{a priori} determined parameter $\sigma$
(e.g.~\cite{ng01}).  
A well-tuned $\sigma$ is important.
A slight change in $\sigma$ 
may greatly affect the processing outcomes as observed in some previous 
work~(e.g.~\cite{xiang10}).  Several studies have found the \emph{kNN}-graph 
and the $\varepsilon$-ball graphs to be inferior to other proximity 
graphs~\cite{daitch09,han15,zhang14} for which both the graph connectivity
and the edge weights are determined simultaneously by solving an 
optimization problem.

We consider the graph proposed by Daitch, Kelner, and
Spielman~\cite{daitch09}.  It is provably sparse, and experiments have
shown that it offers good performance in classification, clustering
and regression.  This graph is defined via quadratic optimization as
follows: Let $P = \{p_1, p_2, \dots, p_n\}$ be a set of $n$~points
in~$\Reals^{d}$.  We assign weights $w_{ij} \geq 0$ to each pair of
points~$(p_i, p_j)$, such that $w_{ij} = w_{ji}$ and $w_{ii} = 0$.
These weights determine for each point~$p_i$ a vector~$\vec{v}_i$, as
follows:
\[
\vec{v}_i = \sum_{j = 1}^{n} w_{ij} (p_j - p_{i}).
\]
Let $v_i$ denote $\|\vec{v}_i\|$.  The weights are chosen so as to
minimize the sum
\[
Q = \sum_{i=1}^{n} v_{i}^{2},
\]
under the constraint that the weights for each point add up to at
least one (to prevent the trivial solution of $w_{ij} = 0$ for all $i$
and~$j$):
\[
\sum_{j=1}^{n} w_{ij} \geq 1 \qquad \text{for~$1 \leq i \leq n$}.
\]
The resulting graph contains an edge connecting $p_i$ and $p_j$ if and
only if $w_{ij} > 0$.

Daitch et al.~\cite{daitch09} showed that there is an optimal solution
where at most $(d+1)n$ weights are non-zero.  Moreover, in two
dimensions, optimal weights can be chosen such that the graph is
planar.

The optimal weights can be computed by quadratic
programming.  A quadratic programming problem with $m$ variables can be solved in $\tilde{O}(m^3)$ 
time in the worst case~\cite{monteiro89}.
In our case, there are $n(n-1)/2$ variables, which gives a worst-case running time of $\tilde{O}(n^6)$.  
Graphs based on optimizing other
convex quality measures have also been
considered~\cite{jebara09,zhang14}.  Our goal is to design an algorithm 
to compute the optimal weights in
Daitch et al.'s formulation that is significantly faster than
quadratic programming.  Perhaps surprisingly, this problem is
challenging even for points in one dimension, that is, when all points
lie on a line.  In this case, it is not difficult to show
(Lemma~\ref{lem:consecutive}) that there is an optimal solution such
that $w_{ij} > 0$ if and only if $p_i$ and $p_j$ are consecutive.

Despite its simplicity, the one-dimensional problem can model the task
of detecting change points and concept drift in a time series (e.g.~\cite{amin17,hido08,klink00,lee18,scholz07});
for example, seasonal changes in sales figures and customer behavior.  A time series of 
multi-dimensional data $(z_1,z_2,\cdots)$ is given,
and the problem is to decide the time steps $t$ at which there is a
``significant change'' from $z_{t-1}$ to $z_t$.   Suppose that
the ``distance'' between $z_{t-1}$ and $z_t$ can be computed according to some formula appropriate for the application (e.g.~\cite{hido08}).
By forming a path graph with vertices corresponding to the data points and edge weights determined as mentioned 
previously, one can apply clustering algorithms (e.g.~\cite{daitch09,ng01})  to group ``similar'' vertices and detect 
the change points as the boundaries of adjacent clusters.  This gives a potential application of the graph 
fitting problem in one dimension.

In general, although there are only $n-1$ variables in one dimension,
the weights in an optimal solution do not seem to follow any simple
pattern as we illustrate in the following two examples.

Some weights in an optimal solution can be arbitrarily high.  Consider
four points $p_1,p_2,p_3,p_4$ in left-to-right order such that $p_2
- p_1 = p_4 - p_3= 1$ and $p_3 - p_2 = \varepsilon$.  By
symmetry, $w_{12} = w_{34}$, and so $v_1 = v_4 = w_{12}$.  Since
$w_{12} + w_{23} \geq 1$ and $w_{23} + w_{34} \geq 1$ are trivially
satisfied by the requirement that $w_{12} = w_{34} \geq 1$, we can
make $v_2$ zero by setting $w_{23} = w_{12}/\varepsilon$.  In the optimal
solution, $w_{12} = w_{34} = 1$ and $w_{23} = 1/\varepsilon$.  So
$w_{23}$ can be arbitrarily large.

Given points $p_1, \cdots, p_n$ in left-to-right order, it seems ideal
to make $v_i$ a zero vector. One can do this for $i \in [2,n-1]$ by
setting $w_{i-1,i}/w_{i,i+1} = (p_{i+1}- p_i)/(p_i - p_{i-1})$,
however, some of the constraints $w_i + w_{i+1} \geq 1$ may be
violated.  Even if we are lucky that for $i \in [2,n-1]$, we can set
$w_{i-1,i}/w_{i,i+1} = (p_{i+1} - p_i)/(p_i - p_{i-1})$ without
violating $w_i + w_{i+1} \geq 1$, the solution may not be optimal as
we show below.  Requiring $v_i = 0$ for $i \in [2,n-1]$ gives $v_1 =
v_n = w_{12}(p_2 - p_1)$.  In general, we have $p_2 - p_1 \not=
p_n-p_{n-1}$, so we can assume that $p_2-p_1 >
p_n-p_{n-1}$.  Then, $w_{n-1,n} =
w_{12}(p_2-p_1)/(p_n-p_{n-1}) > 1$ as $w_{12} \geq 1$.  Since
$w_{n-1,n} > 1$, one can decrease $w_{n-1,n}$ by a small quantity
$\delta$ while keeping its value greater than 1.  Both constraints
$w_{n-1,n} \geq 1$ and $w_{n-2,n-1} + w_{n-1,n} \geq 1$ are still
satisfied.  Observe that $v_n$ drops to
$w_{12}(p_2-p_1)-\delta(p_n-p_{n-1})$ and $v_{n-1}$ increases to
$\delta(p_n-p_{n-1})$.  Hence, $v_{n-1}^2 + v_n^2$ decreases by
$2\delta w_{12}(p_2-p_1)(p_n-p_{n-1}) -
2\delta^2(p_n-p_{n-1})^2$, and so does~$Q$.  The original setting of
the weights is thus not optimal.  If $w_{n-3,n-2} + w_{n-2,n-1} > 1$,
it will bring further benefit to decrease $w_{n-2,n-1}$ slightly so
that $v_{n-1}$ decreases slightly from $\delta (p_n - p_{n-1})$ and
$v_{n-2}$ increases slightly from zero.  Intuitively, instead of
concentrating $w_{12}(p_2 - p_1)$ at $v_n$, it is better to
distribute it over multiple points in order to decrease the sum of
squares.  But it does not seem easy to determine the best weights.

Although there are only $n-1$ variables in one dimension, quadratic
programming still yields a running time of $\tilde{O}(n^3)$.  We
present a dynamic programming algorithm that computes the optimal
weights in $O(n^2)$ time in the one-dimensional case.  The
intermediate solutions computed by the algorithm have an interesting
structure where the derivative of the quality measure depends on the
derivative of a subproblem's quality measure as well as the inverse of
this derivative function.  We have experimental evidence that
explicitly representing the intermediate solutions may require
exponential time and space, and so we develop an implicit
representation that facilitates the dynamic programming algorithm.

\section{A single-parameter quality measure function}

We will assume that the points are given in sorted order, so that $p_1
< p_2 < p_3 < \dots < p_n$.  We first argue that the only weights that
need to be non-zero are the weights between consecutive points, that
is, weights of the form~$w_{i,i+1}$.

\begin{lemma}
  \label{lem:consecutive}
  For $d=1$, there is an optimal solution where only weights between
  consecutive points are non-zero.
\end{lemma}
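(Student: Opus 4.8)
The plan is to start from an \emph{arbitrary} optimal solution and show that a non-consecutive edge can always be replaced by two shorter edges in a way that leaves every flux $\vec v_k$ (and hence $Q$) unchanged, preserves all degree constraints, and strictly decreases an auxiliary potential. A solution that minimizes this potential among all optimal solutions then cannot contain any non-consecutive edge, which is exactly what we want.

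Concretely, I would introduce the potential $\Phi(w) = \sum_{i<j} w_{ij}\,(p_j - p_i)^3$, which is finite and nonnegative on the feasible region, and argue that it attains its minimum over the (closed, convex, nonempty) set $S$ of optimal solutions. Let $w^{\ast}$ be such a minimizer, and suppose for contradiction that $w^{\ast}_{ij} > 0$ for some $i < j$ with $j \geq i+2$. Writing $w = w^{\ast}_{ij}$, $L = p_j - p_i$, $L_1 = p_{i+1} - p_i$ and $L_2 = p_j - p_{i+1}$ (so $L = L_1 + L_2$), I set $w_{ij}$ to zero and increase the two weights $w_{i,i+1}$ and $w_{i+1,j}$ by $a = wL/L_1$ and $b = wL/L_2$ respectively.

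The key computation is that this exchange leaves every flux unchanged: the removed edge contributed $wL$ to $\vec v_i$ and $-wL$ to $\vec v_j$, and one checks directly that the new edge $(i,i+1)$ restores $wL$ at $p_i$, the new edge $(i+1,j)$ restores $-wL$ at $p_j$, and their contributions $-wL$ and $+wL$ cancel at $p_{i+1}$; all other fluxes are untouched. Hence $Q$ is unchanged and the new weights are still optimal. Feasibility is preserved because $a \geq w$ and $b \geq w$ (as $L \geq L_1$ and $L \geq L_2$), so the degrees of $p_i$ and $p_j$ do not drop while the degree of $p_{i+1}$ only grows. Finally, since $L_1^{2} + L_2^{2} - L^{2} = -2L_1 L_2$, the potential changes by $\Delta\Phi = a L_1^3 + b L_2^3 - w L^3 = wL\,(L_1^{2}+L_2^{2}-L^{2}) = -2w L L_1 L_2 < 0$, contradicting the minimality of $\Phi$ at $w^{\ast}$. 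Therefore $w^{\ast}$ has only consecutive edges.

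I expect the main obstacle to be twofold. First, one must find a potential that is not merely \emph{preserved} but \emph{strictly decreased} by the flux-preserving exchange: with the natural choice $\sum w_{ij}(p_j-p_i)^2$ the exchange is exactly neutral, and it is the super-quadratic exponent that forces the strict decrease, so choosing the exponent correctly is the real content of the argument. Second, one must justify that the minimum of $\Phi$ over $S$ is actually attained, since $S$ can be unbounded: to any optimal solution one may add an arbitrary nonnegative weight vector all of whose fluxes vanish without changing $Q$ or violating feasibility. Here I would observe that every recession direction $u$ of $S$ is such a nonnegative flux-free vector with $u \neq 0$, so that $\Phi$ increases linearly with strictly positive rate $\Phi(u) = \sum u_{ij}(p_j-p_i)^3 > 0$ along it; thus $\Phi$ is coercive on $S$ and its minimum is attained. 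Everything else is the routine flux bookkeeping sketched above.
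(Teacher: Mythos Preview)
Your argument is correct and uses the same flux-preserving edge split as the paper: given a non-consecutive edge, replace it by two shorter edges with weights scaled so that every $v_k$ is unchanged and all degree constraints are still met. The paper simply exhibits this transformation and declares the new solution optimal, leaving the existence of an optimal solution with \emph{no} non-consecutive edges implicit; your addition of the cubic potential $\Phi(w)=\sum_{i<j}w_{ij}(p_j-p_i)^3$, together with the recession-cone argument showing $\Phi$ attains its minimum on the (possibly unbounded) optimal set, is exactly what is needed to turn that one-step observation into a complete proof, and the computation $\Delta\Phi=-2wLL_1L_2<0$ is the right way to see that the quadratic potential would be neutral while the cubic one strictly drops.
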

\begin{proof}
  For a solution~$S$ and~$t \in \{1, 2, \dots, n-1\}$, let~$z_{t}(S)$
  be the number of non-zero weights at distance~$t$ (that
  is,~$z_{t}(S)$ is the cardinality of the set~$\{ i \mid w_{i,i+t} >
  0 \text{~in $S$}\}$), and let $\mathbf z(S)$ be the vector
  \[
  \mathbf z(S) = \big(z_{n-1}(S), z_{n-2}(S), \dots, z_3(S), z_{2}(S)\big).
  \]
  Among all optimal solutions, that is, solutions minimizing the
  quality~$Q$, let $O$ be a solution that lexicographically
  minimizes~$\mathbf z(O)$.  We will show that $\mathbf z(O) = (0,
  \dots, 0)$.

  Assume to the contrary that there is a weight~$w_{i,k} > 0$ for some
  $i < k-1$ in~$O$.  Let $j$ be an arbitrary index strictly between
  $i$ and $k$.  We construct a new optimal solution as follows: Let $a
  = p_{j} - p_{i}$, $b = p_{k} - p_{j}$, and $w = w_{i,k}$.  In the new
  solution, we set $w_{i,k}= 0$, increase $w_{i,j}$ by $\frac{a+b}{a}w$,
  and increase $w_{j,k}$ by~$\frac{a+b}{b}w$.  Note that since $a+b >
  a$ and $a+b > b$, the sum of weights at each vertex increases, and
  so the weight vector remains feasible.  The value $v_{j}$ changes by
  $-a \times \frac{a+b}{a} w + b \times \frac{a+b}{b} w = 0$; the
  value $v_{i}$ changes by $-(a+b)\times w + a \times \frac{a+b}{a} w
  = 0$; the value $v_{k}$ changes by $+(a+b)\times w - b \times
  \frac{a+b}{b} w = 0$.  It follows that the new solution~$O'$ has the
  same quality as~$O$, and is therefore also optimal.  Since $\mathbf
  z(O')$ is lexicographically smaller than~$\mathbf z(O)$, this is a
  contradiction to the choice of~$O$.
\end{proof}

To simplify the notation, we set $d_{i} = p_{i+1} - p_{i}$, for $1
\leq i < n$; rename the weights as $w_{i} := w_{i,i+1}$, again for
$1 \leq i < n$; observe that 
\begin{align*}
  v_{1} & = w_{1} d_{1}, \\
  v_{i} & = \left|w_{i} d_{i} - w_{i-1} d_{i-1}\right| \qquad \text{for $2 \leq i
    \leq n-1$}, \\
  v_{n} & = w_{n-1} d_{n-1}.
\end{align*}
For $i \in [2,n-1]$, we introduce the quantity
\begin{align*}
Q_{i} & = d_{i}^{2}w_{i}^{2} +
\sum_{j=1}^{i} v_{j}^{2} \\
& = d_{i}^{2}w_{i}^{2} + d_1^2w_1^2 +
\sum_{j=2}^{i} (d_jw_j - d_{j-1}w_{j-1})^2, 
\end{align*}
and note that $Q_{n-1} = \sum_{i=1}^n v_{i}^{2} = Q$.  Thus, our goal is to
choose the $n-1$ non-negative weights~$w_{1}, \dots, w_{n-1}$ such that $Q_{n-1}$
is minimized, under the constraints
\begin{align*}
  w_{1} & \geq 1, \\
  w_{j} + w_{j+1} & \geq 1 \qquad \text{for $2 \leq j \leq n-2$},\\
  w_{n-1} & \geq 1.
\end{align*}

The quantity~$Q_{i}$ depends on $w_{1}, w_{2}, \dots,
w_{i}$.  We concentrate on $w_i$ and consider
the function
\[
w_{i} \mapsto Q_{i}(w_{i}) = \min_{w_1, \dots, w_{i-1}} Q_{i}(w_1,
w_2, \dots, w_{i-1}, w_i),
\]
where the minimum is taken over all choices of $w_{1},\dots, w_{i-1}$
that respect the constraints $w_{1} \geq 1$ and $w_{j} + w_{j+1} \geq
1$ for $2 \leq j \leq i-1$.  The function $Q_{i}(w_{i})$ is defined
on~$[0, \infty)$.

We denote the derivative of the function~$w_{i} \mapsto Q_{i}(w_{i})$
by~$R_i$.  We will see shortly that~$R_{i}$ is a continuous, piecewise
linear function.  Since $R_{i}$ is not differentiable everywhere, we
define~$S_{i}(x)$ to be the right derivative of~$R_{i}$, that is
\[
S_{i}(x) = \lim_{~y \rightarrow x^+} R_i'(y).
\]
The following result discusses~$R_{i}$ and~$S_{i}$.  The shorthand
\[
\ddi := 2d_i d_{i+1}, \quad \mbox{for $1 \leq i < n-1$}, 
\]
will be convenient in
its proof and the rest of the paper.
\begin{theorem}
  \label{thm:ri}
  The function $R_i$ is strictly increasing, continuous, and piecewise
  linear on the range~$[0, \infty)$. We have $R_i(0) < 0$, $S_{i}(x)
    \geq (2 + \nicefrac 2i)d_{i}^{2}$ for all $x \geq 0$, and
    $R_{i}(x) = (2 + \nicefrac 2i)d_{i}^{2}x$ for sufficiently
    large~$x > 0$.
\end{theorem}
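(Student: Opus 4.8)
The plan is to prove all assertions simultaneously by induction on~$i$, driven by a recursion expressing $Q_i(w_i)$ in terms of~$Q_{i-1}$. First I would derive that recursion. Writing $\sum_{j=1}^{i-1}v_j^2 = Q_{i-1}-d_{i-1}^2w_{i-1}^2$ and substituting $v_i^2=(d_iw_i-d_{i-1}w_{i-1})^2$ into the definition of~$Q_i$, the two $d_{i-1}^2w_{i-1}^2$ terms cancel and one is left with
\[
Q_i = 2d_i^2w_i^2 - \ddof{i-1}\,w_{i-1}w_i + Q_{i-1}.
\]
Since the leading term is independent of $w_1,\dots,w_{i-1}$, minimizing over these weights (doing $w_{i-1}$ last) gives
\[
Q_i(w_i) = 2d_i^2w_i^2 + \min_{w_{i-1}}\bigl[\,Q_{i-1}(w_{i-1}) - \ddof{i-1}\,w_{i-1}w_i\,\bigr],
\]
where for $i\geq 3$ the inner minimization runs over $w_{i-1}\geq\max(0,1-w_i)$ (encoding $w_{i-1}\geq 0$ and $w_{i-1}+w_i\geq 1$), and the base case $i=2$ minimizes over $w_1\geq 1$ with $Q_1(w_1)=2d_1^2w_1^2$. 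The base case is dispatched by direct calculation: one finds $R_2(x)=4d_2^2x-2d_1d_2$ for $x<2d_1/d_2$ and $R_2(x)=3d_2^2x$ beyond, which satisfies every claim, noting $3d_2^2=(2+\nicefrac{2}{2})d_2^2$.

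For the inductive step, let $w_{i-1}^*(w_i)$ denote the minimizer in the bracket. By the inductive hypothesis $Q_{i-1}$ is convex with strictly increasing derivative~$R_{i-1}$, so the bracketed function is strictly convex in $w_{i-1}$ and $w_{i-1}^*$ is unique and continuous; its unconstrained stationarity condition reads $R_{i-1}(w_{i-1})=\ddof{i-1}w_i$, i.e.\ $w_{i-1}=R_{i-1}^{-1}(\ddof{i-1}w_i)$. This is exactly where the inverse of the subproblem's derivative enters. When this unconstrained optimum is feasible, the envelope theorem gives $R_i(w_i)=4d_i^2w_i-\ddof{i-1}w_{i-1}^*(w_i)$; differentiating once more through $\frac{d}{dw_i}R_{i-1}^{-1}(\ddof{i-1}w_i)=\ddof{i-1}/S_{i-1}(w_{i-1}^*)$ produces the key identity
\[
S_i = 4d_i^2 - \frac{\ddof{i-1}^2}{S_{i-1}(w_{i-1}^*)}.
\]

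The crux is then a single algebraic check. Substituting $\ddof{i-1}=2d_{i-1}d_i$ and the inductive bound $S_{i-1}\geq(2+\nicefrac{2}{i-1})d_{i-1}^2=\frac{2i}{i-1}d_{i-1}^2$ gives $\ddof{i-1}^2/S_{i-1}\leq \tfrac{2(i-1)}{i}d_i^2$, hence
\[
S_i \geq 4d_i^2 - \frac{2(i-1)}{i}d_i^2 = \bigl(2+\nicefrac{2}{i}\bigr)d_i^2,
\]
which is precisely the bound one level up; this is the only place the exact constant matters, and it is what makes the induction close. In the complementary regime, where the constraint $w_{i-1}=1-w_i$ is active (possible only for $w_i<1$), substituting and differentiating directly gives $S_i=4d_i^2+S_{i-1}(1-w_i)+2\ddof{i-1}>4d_i^2\geq(2+\nicefrac{2}{i})d_i^2$ for $i\geq 2$, so the bound holds there too. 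Continuity of~$R_i$ at the boundary between the regimes follows because $R_{i-1}(w_{i-1}^*)=\ddof{i-1}w_i$ there makes the two expressions for $R_i$ coincide; strict monotonicity follows from $S_i\geq(2+\nicefrac{2}{i})d_i^2>0$, and piecewise linearity from that of $R_{i-1}^{-1}$ in the feasible regime and linearity in the active regime. For $R_i(0)<0$: at $w_i=0$ feasibility forces $w_{i-1}^*(0)\geq 1$, so $R_i(0)=-\ddof{i-1}w_{i-1}^*(0)<0$. For large $w_i$ the inductive hypothesis makes $R_{i-1}$, hence $R_{i-1}^{-1}$, linear through the origin with the constraint inactive, and the same substitution yields $R_i(x)=(2+\nicefrac{2}{i})d_i^2x$ exactly.

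The step I expect to be the main obstacle is the bookkeeping around $w_{i-1}^*(w_i)$: locating where feasibility switches from inactive to active, handling the active branch (where $w_i$ enters both objective and constraint, so the naive envelope formula fails and one must differentiate the substituted expression), and confirming that $R_i$ stays continuous and piecewise linear across every breakpoint, including those inherited from the finitely many kinks of $R_{i-1}^{-1}$. The derivative identity and its algebraic consequence are clean; the care lies in the case analysis and in checking that right-derivatives match correctly through the inversion $R_{i-1}\mapsto R_{i-1}^{-1}$.
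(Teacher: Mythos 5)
Your proposal is correct and follows essentially the same route as the paper: the same recursion $Q_i = Q_{i-1} - \xi_{i-1}w_{i-1}w_i + 2d_i^2w_i^2$, the same base case $R_2$, the same split into the constraint-inactive branch (where $w_{i-1}=R_{i-1}^{-1}(\xi_{i-1}w_i)$) and the constraint-active branch (where $w_{i-1}=1-w_i$), and the identical key computation $S_i \geq 4d_i^2 - \xi_{i-1}^2/S_{i-1}(\cdot) \geq (2+\nicefrac{2}{i})d_i^2$. The only nit is that your formula $R_i(0)=-\xi_{i-1}w_{i-1}^*(0)$ is the inactive-branch envelope expression and does not apply when the constraint is active at $w_i=0$ (there the value is $-R_{i-1}(1)-\xi_{i-1}$, which is still negative), so the conclusion stands.
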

\begin{proof}
  We prove all claims by induction over~$i$.  The base case is
  $i=2$. Observe that
  \[
  Q_{2} = v_{1}^{2} + v_{2}^{2} + d_{2}^{2}w_{2}^{2}
  = 2d_{1}^{2}w_{1}^{2} - 2d_{1}d_{2}w_{1}w_{2} + 2d_{2}^{2}w_{2}^{2}.
  \]
  
  The derivative with respect to~$w_{1}$ is
  \begin{equation}
  \frac{\partial}{\partial w_{1}} Q_{2} =
  4d_{1}^{2}w_{1} - 2d_1 d_2 w_{2},
  \label{eq:0}
  \end{equation}
  which implies that $Q_{2}$ is minimized for $w_{1} =
  \frac{d_{2}}{2d_{1}}w_{2}$.  This choice is feasible (with respect
  to the constraint~$w_{1} \geq 1$) when $w_{2} \geq
  \frac{2d_1}{d_2}$.  If $w_{2} < \frac{2d_1}{d_2}$, then
  $\frac{\partial}{\partial w_1} Q_2$ is positive for all values
  of~$w_{1} \geq 1$, so the minimum occurs at $w_{1} = 1$.  It follows
  that
  \[
  Q_2(w_2) = \begin{cases}
    \frac 32 d_{2}^{2} w_{2}^{2} & \text{for } w_2 \geq \frac{2d_1}{d_2},
    \\
    2d_{2}^{2}w_{2}^{2} - \ddone w_2 + 2d_1^{2} & \text{otherwise},
  \end{cases}
  \]
  and so we have
  \begin{equation}
  R_2(w_2) = \begin{cases}
    3 d_{2}^{2} w_{2} & \text{for } w_2 \geq \frac{2d_1}{d_2},
    \\
    4d_{2}^{2}w_{2} - \ddone & \text{otherwise}.
  \end{cases}
  \label{eq:r2}
  \end{equation}
  In other words, $R_2$ is piecewise linear and has a single
  breakpoint at~$\frac{2d_1}{d_2}$.  The function $R_2$ is continuous
  because $3d_{2}^{2}w_2 = 4d_{2}^{2}w_{2}-\ddone$ when $w_{2} =
  \frac{2d_1}{d_2}$.  We have~$R_{2}(0) = -\ddone < 0$, $S_2(x) \geq
  3d_2^2$ for all~$x \geq 0$, and $R_{2}(x) = 3d_{2}^{2} x$ for $x
  \geq \frac{2d_1}{d_2}$.  The fact that $S_2(x) \geq 3d_2^2 > 0$
  makes $R_2$ strictly increasing.

  Consider now $i \geq 2$, assume that~$R_i$ and~$S_{i}$ satisfy the
  induction hypothesis, and consider~$Q_{i+1}$.  By definition, we
  have
  \begin{equation}
  Q_{i+1} = Q_{i} - \ddi w_{i}w_{i+1} + 2d_{i+1}^{2}w_{i+1}^{2}.
  \label{eq:1}
  \end{equation}
  For a given value of $w_{i+1} \geq 0$, we need to find the value of
  $w_{i}$ that will minimize~$Q_{i+1}$.  The derivative is
 \[
  \frac{\partial}{\partial w_{i}} Q_{i+1} =
  R_{i}(w_{i}) - \ddi w_{i+1}.
  \]
  The minimum thus occurs when $R_{i}(w_{i}) = \ddi w_{i+1}$.

  Since~$R_{i}$ is a strictly increasing continuous function with
  $R_{i}(0) < 0$ and $\lim_{x\rightarrow \infty} R_i(x)=\infty$, for
  any given $w_{i+1} \geq 0$, there exists a unique value~$w_{i} =
  R_{i}^{-1}(\ddi w_{i+1})$.  However, we also need to satisfy the
  constraint $w_{i} + w_{i+1} \geq 1$.

  We first show that $R_{i+1}$ is continuous and piecewise linear, and
  that $R_{i+1}(0) < 0$.  We will distinguish two cases, based on
  the value of $\wio := R_{i}^{-1}(0)$.
  
  \paragraph{Case~1:} $\wio \geq 1$.  This means that $R_{i}^{-1}(\ddi
  w_{i+1}) \geq 1$ for any $w_{i+1} \geq 0$, and so the constraint of
  $w_i + w_{i+1} \geq 1$ is satisfied for the optimal choice of~$w_{i}
  = R_i^{-1}(\xi_i w_{i+1})$.  It follows that
  \begin{align*}
    Q_{i+1}(w_{i+1}) & = Q_{i}\big(R_{i}^{-1}(\ddi w_{i+1})\big) - \ddi
    w_{i+1}R_i^{-1}(\ddi w_{i+1}) \\
    & \quad+ 2d_{i+1}^{2} w_{i+1}^{2}.
  \end{align*}
  The derivative $R_{i+1}$ is therefore
  \begin{align}
    R_{i+1}(w_{i+1}) & = R_{i}(R_{i}^{-1}(\ddi w_{i+1}))
    \frac{\ddi}{R'_{i}(R_{i}^{-1}(\ddi w_{i+1}))}
    \nonumber \\
    & \quad - \ddi R_{i}^{-1}(\ddi w_{i+1}) \nonumber \\
    & \quad - \ddi w_{i+1}
    \frac{\ddi}{R'_{i}(R_{i}^{-1}(\ddi w_{i+1}))} \nonumber \\
    & \quad + 4d_{i+1}^{2} w_{i+1} \nonumber \\
    & = 4d_{i+1}^{2} w_{i+1} - \ddi R_{i}^{-1}(\ddi w_{i+1}). \label{eq:2}
  \end{align}
  Since $R_{i}$ is continuous and piecewise linear, so is
  $R_{i}^{-1}$, and therefore~$R_{i+1}$ is continuous and piecewise
  linear.  We have $R_{i+1}(0) = -\ddi \wio < 0$.

  \paragraph{Case~2:} $\wio < 1$.  Consider the function $x \mapsto
  f(x) = x + R_{i}(x)/\ddi$.  Since $R_i$ is continuous and strictly
  increasing by the inductive assumption, so is the function $f$.
  Observe that $f(\wio) = \wio < 1$.  As $\wio < 1$, we have $R_i(1) >
  R_i(\wio) = 0$, which implies that $f(1) > 1$.  Thus, there exists a
  unique value~$\wis \in (\wio ,1)$ such that $f(\wis) = \wis +
  {R_{i}(\wis)}/{\ddi} = 1$.

  For $w_{i+1}\geq 1 - \wis = R_{i}(\wis)/\ddi$, we have
  $R_{i}^{-1}(\ddi w_{i+1}) \geq \wis$, and so $R_{i}^{-1}(\ddi
  w_{i+1}) + w_{i+1}\geq 1$.  This implies that the constraint $w_i + w_{i+1} \geq 1$ is satisfied when $Q_{i+1}(w_{i+1})$ is
  minimized for the optimal choice of $w_{i} = R_{i}^{-1}(\ddi w_{i+1})$.  So
  $R_{i+1}$ is as in~\eqref{eq:2} in Case~1.

  When $w_{i+1} < 1-\wis$, the constraint $w_{i} + w_{i+1} \geq 1$
  implies that $w_{i} \geq 1 - w_{i+1} > \wis$.  For any $w_{i} >
  \wis$ we have $\frac{\partial}{\partial w_{i}}Q_{i+1} = R_{i}(w_{i})
  - \ddi w_{i+1} > R_{i}(\wis) - \ddi (1-\wis) = 0$.  So $Q_{i+1}$ is
  increasing, and the minimal value is obtained for the smallest
  feasible choice of~$w_{i}$, that is, for $w_{i} = 1 - w_{i+1}$. It
  follows that
  \begin{align*}
    Q_{i+1}(w_{i+1}) & = Q_{i}(1-w_{i+1}) - \ddi w_{i+1}(1-w_{i+1}) \\
    & \quad + 2d_{i+1}^{2} w_{i+1}^{2} \\
    & = Q_{i}(1-w_{i+1}) - \ddi w_{i+1} \\
    & \quad + (\ddi + 2d_{i+1}^{2}) w_{i+1}^{2},
  \end{align*}
  and so the derivative $R_{i+1}$ is
  \begin{align}
    R_{i+1}(w_{i+1}) & = -R_{i}(1-w_{i+1}) \nonumber \\
    & \quad + (2\ddi + 4d_{i+1}^{2})w_{i+1} - \ddi.
    \label{eq:3}
  \end{align}
  
  Combining \eqref{eq:2} and \eqref{eq:3}, we have
  \begin{itemize}
  	\item If $w_{i+1} < 1-\wis$, then
  	\begin{align}
  	R_{i+1}(w_{i+1}) & = -R_{i}(1-w_{i+1}) \nonumber \\
  	& \quad + (2\ddi + 4d_{i+1}^{2})w_{i+1} - \ddi.
  	\label{eq:3-1}
  	\end{align}
  	\item If $w_{i+1} \geq 1-\wis$, then
  	\begin{align}
  	R_{i+1}(w_{i+1}) & = 4d_{i+1}^{2} w_{i+1} - \ddi R_{i}^{-1}(\ddi w_{i+1}).
  	\label{eq:3-2}
  	\end{align}
  \end{itemize}

  For $w_{i+1} = 1 - \wis$, we have $R_{i}(1-w_{i+1}) = R_{i}(\wis) =
  \ddi(1-\wis)$ and $R_{i}^{-1}(\ddi w_{i+1}) = R_{i}^{-1}(\ddi(1-\wis))
  = \wis$, and so both expressions have the same value:
  \begin{align*}
    & \quad -R_{i}(1-w_{i+1})  + (2\ddi + 4d_{i+1}^{2})w_{i+1} - \ddi \\
    & = \ddi\wis - \ddi + 2\ddi - 2\ddi\wis + 4d_{i+1}^{2}(1-\wis) - \ddi \\
    & = 4d_{i+1}^{2}(1-\wis) - \ddi\wis \\
    & = 4d_{i+1}^{2}(1-\wis) - \ddi R_{i}^{-1}(\ddi w_{i+1}).
  \end{align*}
  Since $R_{i}$ is continuous and piecewise linear, this implies that
  $R_{i+1}$ is continuous and piecewise linear.  We have $R_{i+1}(0) =
  -R_{i}(1) - \ddi$.  Since $\wio < 1$, we have $R_{i}(1) >
  R_{i}(\wio) = 0$, and so $R_{i+1}(0) < 0$.

  \medskip

  Next, we show that $S_{i+1}(x) \geq (2 + \nicefrac 2{i+1})
  d_{i+1}^2$ for all $x \geq 0$, which implies that $R_{i+1}$ is
  strictly increasing.  If $\wio < 1$ and $x < 1-\wis$, then by
  \eqref{eq:3-1},
  \begin{align*}
  S_{i+1}(x) & = S_{i}(1-x) + 2\ddi + 4d_{i+1}^{2} \\
  & \quad > 4d_{i+1}^{2} \\
  & \quad > (2 + \nicefrac 2{i+1})d_{i+1}^{2}.
  \end{align*}
  If $\wio \geq 1$ or $x > 1-\wis$, we have by~\eqref{eq:2}and \eqref{eq:3-2}
  that $R_{i+1}(x) = 4d_{i+1}^{2}x - \ddi
  R_{i}^{-1}(\ddi x)$.  By the inductive assumption that $S_i(x) \geq
  (2 + \nicefrac 2{i})d_i^2$ for all $x \geq 0$, we get
  $\frac{\partial}{\partial x}R_{i}^{-1}(x) \leq 1/\big((2 + \nicefrac
  2i)d_{i}^{2}\big)$.  It follows that
  \begin{align*}
    S_{i+1}(x) & \geq~4d_{i+1}^{2} - 
    \frac{(2d_{i}d_{i+1})^{2}}{(2+ \nicefrac 2i)d_{i}^{2}}~=~\Big(4 - \frac{4}{2 + \nicefrac 2i}\Big)d_{i+1}^{2}  \\
    & = \Big(4 - \frac{2i}{i + 1}\Big)d_{i+1}^{2} \\
    & = \Big(2 + \frac{2}{i+1}\Big) d_{i+1}^{2}.
  \end{align*}
  This establishes the lower bound on~$S_{i+1}(x)$.  
  
  Finally, by the inductive assumption, when $x$ is large enough, we have
  $R_{i}^{-1}(x) = x/\big((2 + \nicefrac 2i)d_{i}^{2}\big)$, and so
  \begin{align*}
    R_{i+1}(x) & = 4d_{i+1}^{2}x -
    \frac{(2d_{i}d_{i+1})^{2}}{(2+ \nicefrac 2i)d_{i}^{2}} x \\
    & = \Big(2 + \frac{2}{i+1}\Big) d_{i+1}^{2} x,
  \end{align*}
  completing the inductive step and therefore the proof.
\end{proof}

\section{The algorithm}

Our algorithm progressively constructs a representation of the
functions~$R_{2}, R_{3}, \dots, R_{n-1}$.  The function representation
supports the following three operations:
\begin{itemize}
\item Op~1: given $x$, return $R_{i}(x)$;
\item Op~2: given $y$, return $R_{i}^{-1}(y)$;
\item Op~3: given $\xi$, return $\xis$ such that $\xis +
  \frac{R_{i}(\xis)}{\xi} = 1$.
\end{itemize}

The proof of Theorem~\ref{thm:ri} gives the relation between $R_{i+1}$
and $R_i$.  This will allow us to construct the functions one by
one---we discuss the detailed implementation in
Sections~\ref{sec:basic}~and~\ref{sec:fast} below.

Once all functions~$R_2, \dots, R_{n-1}$ are constructed, the optimal
weights~$w_{1}, w_{2}, \dots, w_{n-1}$ are computed from the~$R_{i}$'s
as follows.  Recall that $Q = Q_{n-1}$, so $w_{n-1}$ is the value
minimizing~$Q_{n-1}(w_{n-1})$ under the constraint~$w_{n-1} \geq 1$.
If $R_{n-1}^{-1}(0) \geq 1$, then $R_{n-1}^{-1}(0)$ is the optimal
value for $w_{n-1}$; otherwise, we set $w_{n-1}$ to 1.

To obtain~$w_{n-2}$, recall from \eqref{eq:1} that $Q = Q_{n-1} = Q_{n-2}(w_{n-2}) -
\ddof{n-2}w_{n-2}w_{n-1} + 2d_{n-1}^{2}w_{n-1}^{2}$. Since we have
already determined the correct value of~$w_{n-1}$, it remains to
choose~$w_{n-2}$ so that $Q_{n-1}$ is minimized.  Since
\[
\frac{\partial}{\partial w_{n-2}}Q_{n-1} = R_{n-2}(w_{n-2}) -
\ddof{n-2}w_{n-1},
\]
$Q_{n-1}$ is minimized when $R_{n-2}(w_{n-2}) = \ddof{n-2}w_{n-1}$,
and so~$w_{n-2} = R_{n-2}^{-1}(\ddof{n-2}w_{n-1})$.

In general, for $i \in [2,n-2]$, we can obtain~$w_{i}$ from~$w_{i+1}$ by observing that 
\[
Q_{n-1} = Q_{i}(w_{i}) - \ddi w_{i}w_{i+1} + g(w_{i+1},\ldots,w_{n-1}),
\]
where $g$ is function that only depends on $w_{i+1}, \ldots, w_{n-1}$.
Taking the derivative again, we have
\[
\frac{\partial}{\partial w_{i}}Q_{n-1} = R_{i}(w_{i}) -
\ddi w_{i+1},
\]
so choosing $w_{i} = R_{i}^{-1}(\ddi w_{i+1})$ minimizes~$Q_{n-1}$.  To
also satisfy the constraint~$w_{i} + w_{i+1} \geq 1$, we need to
choose $w_{i} = \max\{R_{i}^{-1}(\ddi w_{i+1}), \, 1 - w_{i+1}\}$ for
$i \in [2,n-2]$.  Finally, from the discussion that immediately
follows~\eqref{eq:0}, we set $w_{1} =
\max\{\frac{d_{2}}{2d_{1}}w_{2},\, 1\}$.  To summarize, we have
\begin{align*}
  w_{n-1} & = \max\{R_{n-1}^{-1}(0), \, 1\},\\
  w_{i} &= \max\{R_{i}^{-1}(\ddi w_{i+1}), \, 1 - w_{i+1}\}, \,
  \text{for } i \in [2,n-2],\\
  w_{1} &= \textstyle\max\{\frac{d_{2}}{2d_{1}}w_{2},\, 1\}.
\end{align*}
It follows that we can obtain the optimal weights using a single Op~2
on each~$R_{i}$.

\subsection{Explicit representation of piecewise linear functions}
\label{sec:basic}

Since $R_{i}$ is a piecewise linear function, a natural representation
is a sequence of linear functions, together with the sequence of
breakpoints.  Since~$R_{i}$ is strictly increasing, all three
operations can then be implemented to run in time~$O(\log k)$ using
binary search, where $k$ is the number of function pieces.

We construct the functions~$R_i$, for $i = 2, \dots, n-1$, one by one.

The function~$R_{2}$ consists of exactly two pieces.  We construct it
directly from~$d_{1}, d_{2}$, and~$\xi_{1}$ using~(\ref{eq:r2}).

To construct~$R_{i+1}$, we make use of the explicit representation
of~$R_i$ that we have already computed.  We first compute~$\wio =
R_{i}^{-1}(0)$ using Op~2 on~$R_{i}$.  If $\wio \geq 1$, then by
\eqref{eq:2} each piece of~$R_{i}$, starting at the
$x$-coordinate~$\wio$, gives rise to a linear piece of~$R_{i+1}$, so
the number of pieces of~$R_{i+1}$ is at most that of~$R_{i}$.

If $\wio < 1$, then we compute $\wis$ using Op~3 on~$R_{i}$.  The new
function $R_{i+1}$ has a breakpoint at~$1-\wis$ by \eqref{eq:3-1} and
\eqref{eq:3-2}.  Its pieces for $x \geq 1-\wis$ are computed from the
pieces of~$R_{i}$ starting at the $x$-coordinate~$\wis$.  Its pieces
for $0 \leq x < 1-\wis$ are computed from the pieces of~$R_{i}$
between the $x$-coordinates~$1$ and~$\wis$. (Increasing~$w_{i+1}$ now
corresponds to a decreasing~$w_{i}$.)

Since a piece of~$R_{i}$ that covers $x$-coordinates in the
range~$[\wis, 1]$ gives rise to \emph{two} pieces of~$R_{i+1}$, the
number of pieces of~$R_{i+1}$ can be twice the number of pieces
of~$R_{i}$.  If this case occurs in every step of the construction,
then function~$R_{i}$ will have~$2^{i-1}$ pieces, leading to
exponential time and space complexity.

It is hard to manually construct inputs that exhibit exponential
growth in the number of pieces of~$R_{i}$.  We have therefore used the
Z3 theorem prover~\cite{z3} to try and satisfy the constraints under
which~$R_{n-1}$ has $2^{n-2}$~pieces. For $n = 10$, which is the
largest instance we have been able to solve using Z3, we obtain the
following nine distances:
\[
\begin{matrix}
  d_1 & d_2 & d_3 & d_4 & d_5 & d_6 & d_7 & d_8 & d_9 \\
  11569 & 49184 & 65536 & 98304 & 109056 & 145408 & 146432 & 147456 & 32768
\end{matrix}
\]
We have implemented the algorithm and verified using exact arithmetic
that for these distances, the function~$R_{i}$ indeed has~$2^{i-1}$
pieces, for $2 \leq i \leq 9$.

Based on this experiment, we find it unlikely that the explicit
representation of the functions~$R_{i}$ will lead to a polynomial-time
algorithm. 

\subsection{A quadratic time implementation}
\label{sec:fast}

To guarantee a polynomial running time, we turn to an implicit
representation of~$R_{i}$.  This representation is based on the fact
that there is a linear relationship between points on the graphs of
the functions~$R_{i}$ and~$R_{i+1}$.  Concretely, let $y_{i} =
R_{i}(x_{i})$, and $y_{i+1} = R_{i+1}(x_{i+1})$.  Recall the following
relation from \eqref{eq:2} for the case of $\wio \geq 1$:
\begin{align*}
  R_{i+1}(w_{i+1}) & = 4d_{i+1}^{2} w_{i+1} - \ddi R_{i}^{-1}(\ddi w_{i+1}).
\end{align*}
We can express this relation as a system of two equations:
\begin{align*}
  y_{i+1} & = 4d_{i+1}^{2} x_{i+1} - \ddi x_{i}, \\
  y_{i} & = \ddi x_{i+1}.
\end{align*}
This can be rewritten as
\begin{align*}
  y_{i+1} & = 4d_{i+1}^{2} y_{i} / \ddi - \ddi x_{i}, \\
  x_{i+1} & = y_{i} / \ddi,
\end{align*}
or in matrix notation
\begin{align}
  \vectorii & = M_{i+1} \cdot \vectori,
    \label{eq:10}
\end{align}
where
\begin{align*}
  M_{i+1} & = \left( \begin{matrix}
  0 & 1/\ddi & 0 \\
  -\ddi & 4d_{i+1}^{2}/\ddi & 0 \\
  0 & 0 & 1
  \end{matrix}\right).
\end{align*}

On the other hand, if $\wio < 1$, then $R_{i+1}$ has a breakpoint at
$1-\wis$. The value $\wis$ can be obtained by appying Op~3 to $R_i$.
We compute the coordinates of this breakpoint: $(1-\wis,
R_{i+1}(1-\wis))$.  Note that $R_{i+1}(1-\wis) = 4d_{i+1}^2(1-\wis) -
\xi_i R_i^{-1}(\xi_i(1-\wis))$ which can be computed by applying Op~2
to $R_i$.  For $x_{i+1} > 1-\wis$, the relationship between $(x_{i},
y_{i})$ and $(x_{i+1}, y_{i+1})$ is given by \eqref{eq:10}.  For $0
\leq x_{i+1} < 1-\wis$, recall from \eqref{eq:3} that
\begin{align*}
  R_{i+1}(w_{i+1}) & = -R_{i}(1-w_{i+1}) \\
  & \quad + (2\ddi +
  4d_{i+1}^{2})w_{i+1} - \ddi.
\end{align*}
We again rewrite this as
\begin{align*}
  y_{i+1} & = -y_{i} + (2\ddi + 4d_{i+1}^{2})x_{i+1} - \ddi, \\
  x_{i} & = 1 - x_{i+1},
\end{align*}
which gives
\begin{align*}
  y_{i+1} & = -y_{i} + (2\ddi + 4d_{i+1}^{2})(1- x_{i}) - \ddi, \\
  x_{i+1} & = 1 - x_{i},
\end{align*}
or in matrix notation:
\begin{align*}
  \vectorii & = L_{i+1} \cdot \vectori,
 \end{align*}
where
\begin{align*}
  L_{i+1} & = \left( \begin{matrix}
  -1 & 0 & 1 \\
  -2\ddi -4d_{i+1}^{2} & -1 & \ddi + 4d_{i+1}^{2} \\
  0 & 0 & 1
  \end{matrix}\right).
\end{align*}

We will make use of this relationship to store the function~$R_{i+1}$,
for $i \geq 2$, by storing the breakpoint~$(\xs_{i+1}, \ys_{i+1}) =
(1-\wis,R_{i+1}(1-\wis))$ as well as the two matrices~$L_{i+1}$
and~$M_{i+1}$.  The function~$R_2$ is simply stored explicitly.

We now discuss how the three operations Op~1, Op~2, and Op~3 are
implemented on this representation of a function~$R_{i}$.  For an
operation on~$R_{i}$, we progressively build transformation matrices
$T_{i}^{i}, T_{i-1}^{i}, T_{i-2}^{i}, \dots, T_{3}^{i}, T_{2}^{i}$
such that $(x_{i}, y_{i}, 1) = T_{j}^{i} \times (x_{j}, y_{j}, 1)$ for
every~$2 \leq j \leq i$ in a neighborhood of the query.  Once we
obtain~$T_{2}^{i}$, we use our explicit representation of~$R_{2}$ to
express~$y_{i}$ as a linear function of~$x_{i}$ in a neighborhood of
the query, which then allows us to answer the query.

The first matrix~$T_{i}^{i}$ is the identity matrix. We obtain
$T_{j}^{i}$ from~$T_{j+1}^{i}$, for $j \in [2,i-1]$, as follows:
If~$R_{j+1}$ has no breakpoint, then $T_{j}^{i} = T_{j+1}^{i} \cdot
M_{j+1}$.  If $R_{j+1}$ has a breakpoint~$(\xs_{j+1}, \ys_{j+1})$,
then either $T_{j}^{i} = T_{j+1}^{i}\cdot M_{j+1}$ or $T_{j}^{i} =
T_{j+1}^{i}\cdot L_{j+1}$, depending on which side of the breakpoint
applies to the answer of the query.  We can decide this by comparing
$(x', y', 1)^t = T_{j+1}^{i} \cdot (\xs_{j+1}, \ys_{j+1}, 1)^t$ with the
query.  More precisely, for Op~1 we compare the input~$x$ with~$x'$,
for Op~2 we compare the input~$y$ with~$y'$, and for Op~3 we compute
$x' + y'/\xi$ and compare with~$1$.

Assuming the Real-RAM model common in computational geometry, where
arithmetic on real numbers takes constant time, it follows that the
implicit representation of~$R_{i}$ supports all three operations
on~$R_{i}$ in time~$O(i)$.

Finally, we discuss how the representation of all functions~$R_i$ is
obtained.  We again build it iteratively, constructing~$R_{2}, R_3,
R_4, \dots, R_{n-1}$, one-by-one in this order.  The first
function~$R_{2}$ is stored explicitly.  To construct the implicit
representation of~$R_{i+1}$, we only need to perform on our representation of~$R_{i}$ (that we already computed)
one Op~2 to get $\wio = R_i^{-1}(0)$, one Op~3 to get $\wis$, and one Op~2 to get $R_i^{-1}(\xi_i(1-\wis))$, which allows us to determine the breakpoint $(1-\wis,R_{i+1}(1-\wis))$, if there is one (when $\wio < 1$).  The two matrices~$L_{i+1}$
and~$R_{i+1}$ can be computed in $O(1)$ time.

Since operations on~$R_{i}$ take time~$O(i)$, the total time to
construct~$R_{n-1}$ is~$O(n^{2})$.
\begin{theorem}
  \label{thm:alg} Given $n$ points on a line, we can compute an
  optimal set of weights for minimizing the quality measure $Q$ in
  $O(n^2)$ time under the Real-RAM model.
\end{theorem}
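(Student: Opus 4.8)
\emph{Proof plan.} The plan is to assemble the components built in the preceding sections and to verify that they combine into a correct $O(n^2)$-time algorithm. First I would invoke Lemma~\ref{lem:consecutive} to restrict attention to the $n-1$ consecutive weights $w_1, \dots, w_{n-1}$, so that minimizing $Q$ becomes the problem of minimizing $Q_{n-1}$ subject to the constraints $w_1 \geq 1$, $w_j + w_{j+1} \geq 1$, and $w_{n-1} \geq 1$. By the definition of $Q_i(w_i)$ as the partial minimum over $w_1, \dots, w_{i-1}$, and by Theorem~\ref{thm:ri}, which guarantees that each derivative function $R_i$ is strictly increasing, continuous, and piecewise linear with $R_i(0) < 0$, the global optimum is obtained by first constructing all the functions $R_2, \dots, R_{n-1}$ and then reading off the weights from them.

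Second, I would bound the construction time. The base function $R_2$ is stored explicitly via~\eqref{eq:r2} in $O(1)$ time. Using the implicit representation of Section~\ref{sec:fast}, each subsequent $R_{i+1}$ is encoded by its breakpoint together with the matrices $L_{i+1}$ and $M_{i+1}$; building it requires only a constant number of queries on $R_i$ (one Op~2 to obtain $\wio$, and, in Case~2, one Op~3 to obtain $\wis$ and one Op~2 to evaluate the breakpoint $R_{i+1}(1-\wis)$), plus $O(1)$ arithmetic. Since an operation on $R_i$ runs in $O(i)$ time through the chain of transformation matrices $T_i^i, \dots, T_2^i$, building $R_{i+1}$ costs $O(i)$, and summing over $i$ from $2$ to $n-2$ yields $O(n^2)$ total construction time.

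Third, I would recover the optimal weights by the back-substitution displayed before the theorem: $w_{n-1} = \max\{R_{n-1}^{-1}(0),\,1\}$, then $w_i = \max\{R_i^{-1}(\ddi w_{i+1}),\, 1 - w_{i+1}\}$ for decreasing $i \in [2,n-2]$, and finally $w_1 = \max\{\frac{d_2}{2d_1} w_2,\, 1\}$. Correctness follows from the nested structure of $Q_{n-1} = Q_i(w_i) - \ddi w_i w_{i+1} + g(w_{i+1}, \dots, w_{n-1})$: the cross term couples $w_i$ only to $w_{i+1}$, so with the later weights fixed at their optimal values the first-order condition $\frac{\partial}{\partial w_i} Q_{n-1} = R_i(w_i) - \ddi w_{i+1}$ identifies the unconstrained stationary point $R_i^{-1}(\ddi w_{i+1})$, and convexity of $Q_i(w_i)$ forces the constrained minimizer to be the larger of this point and the constraint boundary. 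Each $w_i$ uses one Op~2 on $R_i$, at cost $O(i)$, so the sweep is again $O(n^2)$.

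\emph{Main obstacle.} The step I expect to require the most care is verifying that the matrix evaluation of Op~1, Op~2, and Op~3 is correct, that is, that composing the appropriate choice of $M_{j+1}$ or $L_{j+1}$ down to $R_2$ reconstructs the correct linear piece of $R_i$ near the query point. I would prove the invariant $(x_i, y_i, 1) = T_j^i \times (x_j, y_j, 1)$ on a neighborhood of the queried value, and check that at each level the branch decision---made by transforming the stored breakpoint of $R_{j+1}$ forward via $T_{j+1}^i$ and comparing with the query---selects the piece containing the answer. Establishing this invariant, and confirming that the Case~1/Case~2 dichotomy of Theorem~\ref{thm:ri} matches the $M/L$ branching of the recurrence, is the crux; once it holds, each operation visits $i-1$ levels in $O(1)$ work apiece, and the $O(n^2)$ bound follows mechanically.
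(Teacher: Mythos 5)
Your proposal is correct and follows essentially the same route as the paper: construct $R_2,\dots,R_{n-1}$ via the implicit matrix representation with $O(i)$-time operations, then recover the weights by the back-substitution $w_{n-1}=\max\{R_{n-1}^{-1}(0),1\}$, $w_i=\max\{R_i^{-1}(\ddi w_{i+1}),1-w_{i+1}\}$, $w_1=\max\{\frac{d_2}{2d_1}w_2,1\}$, for a total of $O(n^2)$ time. The invariant $(x_i,y_i,1)=T_j^i\times(x_j,y_j,1)$ near the query, together with the branch decision at each stored breakpoint, is exactly how the paper justifies the $O(i)$ cost per operation.
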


\section{Conclusion}

We do not have a polynomial time bound on the running time using the
explicit representation of the functions~$R_i$, and our experiments
with the Z3~solver suggest that the complexity of~$R_i$ may indeed
increase exponentially with~$i$.  It would be interesting to determine
whether one can construct such a worst-case example for any~$n$.

It would also be nice to obtain an algorithm for higher dimensions
that is not based on a quadratic programming solver.

In two dimensions, we have conducted some experiments that indicate
that the Delaunay triangulation of the point set contains a
well-fitting graph.  If we choose the graph edges only from the
Delaunay edges and compute the optimal edge weights, the resulting
quality measure is very close to the best quality measure in the
unrestricted case.  It is conceivable that one can obtain a provably
good approximation from the Delaunay triangulation.

\small
\bibliographystyle{abbrv}

\end{document}